\long\def\@caption#1[#2]#3{\par\addcontentsline{\csname
  ext@#1\endcsname}{#1}{\protect\numberline{\csname 
  the#1\endcsname}{\ignorespaces #2}}\begingroup
    \@parboxrestore
    \small
    \@makecaption{\csname fnum@#1\endcsname}{\ignorespaces #3}\par
  \endgroup}
\def\@listI{%
  \leftmargin=\leftmargini
  \parsep=\z@
  \topsep=2pt plus 2pt minus 2pt
  \itemsep=0pt plus 2pt\relax}%
\let \@listi=\@listI
\title{Morpion Solitaire 5D: \\ a new upper bound of 121 on the maximum score}
\author{%
%   Akitoshi Kawamura\thanks{Department of Computer Science, 
%   University of Tokyo, 
%%   7-3-1 Hongo, Bunkyo-ku, Tokyo 113-8656, 
%   Japan. {\tt kawamura@is.s.u-tokyo.ac.jp}} 
%\and 
%   Takuma Okamoto\thanks{Graduate School of Science, 
%   Osaka Prefecture University, 
%%   1-1 Gakuen-cho, Naka-ku, Sakai 599-8531, 
%   Japan. 
%   {\tt \{ss301002,sr301023,sr301036\}@edu.osakafu-u.ac.jp}}
%%   {\tt ss301002@edu.osakafu-u.ac.jp, sr301023@edu.osakafu-u.ac.jp, sr301036@edu.osakafu-u.ac.jp}}
%\and 
%%\renewcommand{\thefootnote}{\fnsymbol{footnote}}
%\setcounter{footnote}{3}
%   Yuichi Tatsu$\mbox{}^{\dagger}$
%\and 
%   Yushi {\rm Uno}%$\mbox{}^{\dagger}$
%\thanks{
%Graduate School of Science, 
%   Osaka Prefecture University, 
%%   1-1 Gakuen-cho, Naka-ku, Sakai 599-8531, 
%   Japan. 
%{\tt uno@mi.s.osakafu-u.ac.jp}
%}
%\and 
%   Masahide Yamato$\mbox{}^{\dagger}$
   Akitoshi Kawamura\inst{1} \and
   Takuma Okamoto\inst{2} \and
   Yuichi Tatsu\inst{2} \and \\
   Yushi {\rm Uno}\inst{2} \and 
   Masahide Yamato\inst{2}
}
\institute{
    Department of Computer Science, University of Tokyo, 
    7-3-1 Hongo, Bunkyo-ku, Tokyo 113-8656, Japan. 
    \email{kawamura@is.s.u-tokyo.ac.jp} 
\and
    Graduate School of Science, Osaka Prefecture University, 
    1-1 Gakuen-cho, Naka-ku, Sakai 599-8531, Japan. 
    \email{ss301002@edu.osakafu-u.ac.jp}, 
    \email{sr301023@edu.osakafu-u.ac.jp}, 
    \email{uno@mi.s.osakafu-u.ac.jp}, 
    \email{sr301036@edu.osakafu-u.ac.jp}.
}
\begin{document}

\maketitle

\begin{abstract}
Morpion Solitaire is a pencil-and-paper game for a single player. 
A move in this game consists of putting a cross at a lattice point
and then drawing a line segment that passes through 
exactly five consecutive crosses. 
The objective is to make as many moves as possible, 
starting from a standard initial configuration of crosses. 
For one of the variants of this game, 
called 5D, 
we prove an upper bound of $121$ on the number of moves.
This is done by introducing \emph{line-based analysis}, 
and improves the known upper bound of $138$ 
obtained by potential-based analysis. 

\medskip
\noindent
Keywords: \emph{pencil-and-paper game, lattice points, 
line-based analysis.} 
\end{abstract}

\section{Introduction}

\emph{Morpion Solitaire}, also known as \emph{Join Five}, 
is a game 
played alone with a pencil and paper, 
and it is popular in several countries \cite{morpion_web}. 
A move in this game consists of drawing a cross and a line segment
on an infinite square lattice. 
The line segment has to pass through
exactly five consecutive crosses including the one that has just been placed. 
The objective is to make as many moves as possible 
starting from a given initial configuration. 
We call the number of moves the \emph{score}. 
There are two variants of this game 
according to how two line segments can touch each other. 

Demaine et al.~\cite{demaine} studied 
generalizations of the game and their computational complexity, 
and show that a generalized Morpion Solitaire is NP-hard 
and that its maximum score is hard to approximate. 
Another target of interest is 
the maximum scores or their lower and upper bounds. 
Recently, computing maximum scores was used as a test problem 
to evaluate the effectiveness of the Monte-Carlo tree search method, 
which has been attracting rising attention as a promising approach 
in game programming \cite{cazenave,rosin}. 

In this paper, we focus on the 5D variant of the game, 
and show improved upper bounds on the maximum score. 
We first show that the known upper bound of $138$ 
can be improved to $136$ 
by pushing on the existing potential-based approach. 
Next we introduce a line-based approach
and further improve the bound to $121$. 
We also try to organize and present related results, 
since there are relatively few research papers on this topic.

\section{Rules and Records}

\subsection{Rules}

Morpion Solitaire is played on an infinite square lattice. 
Initially 36 crosses are drawn on lattice points 
so that they form a large cross shape with edge length $4$ 
as shown in Figure~\ref{morpion_board}. 
In this figure, a cross is denoted by a circle. 
(In this paper, the length of a line segment means
the number of crosses covered by it.) 

\begin{figure}
\centering
\scalebox{0.70}{\includegraphics{./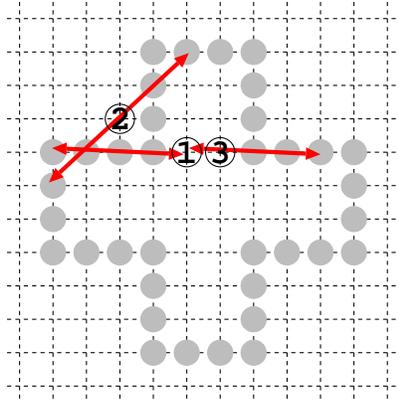}}
%\scalebox{0.65}{\includegraphics{./board.eps}}
\caption{%
The standard initial board layout for Morpion Solitaire 5D and 5T, 
and an example of the first three moves. 
Each cross placed in these moves is denoted 
by a number surrounded by a circle. 
Move 3 is allowed in 5T (touching) but not in 5D (disjoint).} 
\label{morpion_board}
\end{figure}

A {\it move} consists of the following 
two steps applied in this order. 
The objective of this game is to maximize the number of moves. 
\begin{enumerate}
\item 
Draw a new {\it cross} on a lattice point which is empty (no cross exists) 
on the current board. 
\item
Draw a segment of length 5 (called a {\it line}) 
that passes through exactly five consecutive crosses 
including the one drawn in step 1 of this move. 
Here, the line can be drawn in either one of the four directions, 
vertical, horizontal, or diagonal. 
Two lines in the same direction may not overlap. 
\end{enumerate}

There are two variants of this game depending on 
whether two lines in the same direction 
can touch (5T) or have to be disjoint (5D)
(Figure~\ref{morpion_board}). 
We mainly discuss about 5D in this paper. 

When a line $L$ passes a cross $C$, 
we say that $L$ {\it covers} the cross or the lattice point 
on which it is drawn. 
We sometimes call a board after move $N$ a board at move $N$. 
Also we sometimes denote a cross and a line drawn in move $N$ 
by $C_N$ and $L_N$, respectively.

\subsection{Records}

The above definition of the game can be extended to $\alpha$D and $\alpha$T, 
where the lines have length $\alpha$ and 
the edges of the large cross in the initial configuration 
have length $\alpha - 1$, however, 
the maximum scores are known for all variants except $\alpha = 5$.
For 3T and 3D, the maximum scores are not bounded, 
as there are sequences of moves 
that can be repeated infinitely \cite{demaine}. 
For 6T and 6D, we can easily see that 
the maximum score is $12$. 
For 4T and 4D, there used to be gaps between the maximum achieved scores 
and the upper bounds in the past, but in 2007, 
62 and 35 moves were achieved for 4T and 4D, respectively \cite{hyyro}, 
and these scores were proved to be optimal in 2008 \cite{morpion_web}. 
%at 62 for 4T and 35 for 4D. 

Table~\ref{table:record} \cite{morpion_web} shows the current maximum scores
of 5T and 5D. 
We brief\textcompwordmark ly explain how the records 
of these two variants have been developed. 

\begin{table}[b]
\centering
\caption{%
Records on Morpion Solitaire 5T and 5D: 
their maximum achieved scores 
and proven upper bounds.} 
\label{table:record}
\medskip
\renewcommand{\arraystretch}{1.0}
\begin{tabular}{|c||c|c|}
    \hline
    {\small game type} & {\small best achieved score} & {\small upper bound} \\ \hline\hline
    5T & 178 & 705 \\ \hline
    5D & 82 & 138 \\ \hline
\end{tabular}
\end{table}

\medbreak
\noindent \textbf{5T.}
Bruneau achieved 170 in 1976 by hand \cite{bruneau}. 
In 2010, by computer, Akiyama, Komiya and Kotani \cite{akiyama} 
used Monte-Carlo tree search to achieve 145 and 146, 
which were still less than human's record at that time. 
From 2010 to 2011, also by computer, 
Rosin achieved 172, 
beating human's record \cite{boyer}. 
Rosin \cite{rosin} improved the record to 177 in 2011, 
and the current record is 178 \cite{rosin_web}. 
An upper bound of $705$ on the maximum score is known~\cite{demaine}. 

\medbreak
\noindent \textbf{5D.}
According to Demaine et al.~\cite{demaine}, 
$68$ moves was achieved by hand in 1999. 
Cazenave \cite{cazenave} established $80$ in 2008, 
and then Rosin \cite{rosin} improved it to $82$ in 2010, both by computers. 
As for upper bounds, 
Demaine et al.~\cite{demaine} showed 141 in 2006 \cite{demaine} 
and Karjalainen showed 138 in 2011 \cite{karjalainen}. 

\medbreak

Recent records of maximum scores of both 5T and 5D 
were obtained by computers. 
The framework used for this was Monte-Carlo tree search or its extensions, 
which are known to produce excellent results in designing 
computer programs, for example, for playing Shogi or Go against humans. 

Hereafter, in this paper, we focus only on 5D variant 
and aim to improve the upper bound on its maximum score, 
which is known to be 138.

\section{Potential-based Analysis of Upper Bounds}

The known upper bound of 138 on the maximum score of Morpion Solitaire 5D 
is obtained by arguments using `potentials'. 
In this section, 
we explain potentials and the related results, 
and then show that the upper bound can be improved to 136 
by a more detailed analysis based on this approach. 

\subsection{Preceding Research}

The notion of potential in the analysis of Morpion Solitaire 
seems to have been originally introduced in folklore discussions 
and was used 
by Demaine et al.~\cite{demaine}. 
The \emph{potential} of a cross on a board 
is the number of additional lines that can cover it. 
Since a cross can be covered by at most four lines 
(in the vertical, horizontal and two diagonal directions), 
the potential of a cross $C$ is formally given by 
\[
4- (\mbox{number of lines that cover } C). 
\]
We define the {\it total potential} of a board 
to be the sum of the potentials of all crosses on that board. 

\renewcommand{\labelenumi}{(\theenumi)}
\renewcommand{\theenumi}{\roman{enumi}}
Now we can observe the following three facts 
about Morpion Solitaire 5D. 

\medbreak
\noindent
{\bf Observations} 
\begin{enumerate}
\item 
The total potential of the initial board is 144. 
\item
The total potential decreases at least by 1 in every move. 
\item
At any time, playing the next move requires at least a total potential 4. 
\end{enumerate}

\medbreak
\noindent
We have (i) because 
initially there are $36$ crosses, each of which has potential $4$. 
We have (ii) because 
step~1 of a move in 5D adds $4$ to the total potential, 
and step~2 decreases the potential by $5$. 

Demaine et al.~\cite{demaine} showed 
the following upper bound based on the above three observations. 

\begin{theorem}[\cite{demaine}]
The number of moves in Morpion Solitaire 5D
cannot exceed $141$. 
\end{theorem}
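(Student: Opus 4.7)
\medskip
\noindent\textbf{Proof plan.}
The plan is simply to chain the three observations together. Let $N$ denote the total number of moves played in some run of the game, and write $P_k$ for the total potential of the board after the $k$th move (so $P_0$ is the initial total potential).

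First I would invoke observation~(i) to write $P_0 = 144$, and then apply observation~(ii) inductively over the first $N-1$ moves to conclude
\[
P_{N-1} \;\le\; P_0 - (N-1) \;=\; 145 - N.
\]
Next I would use observation~(iii): since the $N$th move is actually played, the board after move $N-1$ must admit at least one legal move, and therefore $P_{N-1} \ge 4$. Combining the two inequalities yields $145 - N \ge 4$, i.e.\ $N \le 141$.

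There is essentially no obstacle here; the argument is a one-line counting chain, and all three ingredients have already been established in the Observations box above. The only mild care needed is to apply~(iii) at the correct time, namely \emph{just before} the final move rather than after it, which is why the bound comes out to $141$ rather than $140$.
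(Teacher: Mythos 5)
Your argument is correct and is exactly the paper's own proof: apply (i) and (ii) to get $144-(N-1)$ as an upper bound on the potential after move $N-1$, then use (iii) to force this quantity to be at least $4$, giving $N\le 141$. Nothing further is needed.
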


To see this, let $M$ be the maximum score (the number of moves). 
The total potential after $M-1$ moves 
must be at least $4$, 
that is, $144 - (M-1) \ge 4$. 

Karjalainen \cite{karjalainen} improved this argument 
and obtained the following result 
by showing that the total potential at any time 
is at least $6$. 

\begin{theorem}[\cite{karjalainen}]
\label{karjalainen138}
The number of moves in Morpion Solitaire 5D
cannot exceed $138$. 
\end{theorem}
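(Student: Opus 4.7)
The plan is to sharpen the termination analysis underlying Theorem~1. In 5D every move changes the total potential by exactly $+4 - 5 = -1$: step~1 creates one cross with potential $4$, and step~2 decreases the potentials of the five crosses it covers by $1$ each. Hence after $k$ moves the potential is \emph{exactly} $144 - k$, so the bound $M \le 138$ is equivalent to the statement that the final board has total potential at least $6$. This is what I would set out to prove.

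To locate the needed residual potential I would decompose the total potential into contributions from direction--cross pairs $(C, d)$, where $d$ is one of the four axis directions and the pair contributes $1$ exactly when no line in direction $d$ covers $C$. The aim is then to exhibit at least six such pairs that are guaranteed to remain uncovered on every terminal board. The natural source of such obstructions is the boundary of the configuration: a cross that is extremal in direction $d$ cannot be covered by any line in direction $d$ unless there is a chain of four further crosses lying entirely on the inward side, and the 5D disjointness rule forbids parallel lines in the same direction to touch, so an uncovered boundary slot often forces additional uncovered slots just inside it.

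I would organise the argument in two steps. The first is a local lemma describing, for each of the four outward directions from the initial plus, which $(C, d)$ pairs near the extreme of the final board must remain uncovered; this amounts to checking a handful of cases based on how far the board extends in that direction and which lines are present near the tip. The second step is to sum the contributions from the four arms and verify that the total is always at least $6$. The main obstacle, and the real technical content, is the case analysis in the first step: the configuration near each arm can in principle take many forms, and one must carefully use the 5D rule to propagate local obstructions and rule out any configuration that would be ``too efficient'' along its boundary. Once these local counts are in hand, summing them gives $144 - M \ge 6$, and hence $M \le 138$.
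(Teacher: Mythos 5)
Your reduction is right and matches the paper's setup: each move changes the total potential by exactly $+4-5=-1$, so after $M$ moves the total potential is $144-M$, and $M\le 138$ follows once you show the terminal board retains potential at least $6$. But that is exactly the step you have not supplied. You delegate it to a boundary/``four arms'' case analysis that you yourself describe as ``the real technical content,'' and the sketch does not make it plausible that this analysis closes. Two concrete problems: first, the terminal board need not resemble the initial plus at all, so there are no ``four arms'' to sum over --- the boundary of the final configuration can be arbitrarily shaped, and a local lemma ``near the tip of each arm'' has no well-defined object to apply to. Second, a cross that is extremal in direction $d$ \emph{can} be covered by a line in direction $d$, namely one whose other four crosses all lie on the inward side; extremality alone yields no uncovered $(C,d)$ pair, and the 5D disjointness rule only constrains parallel lines on the \emph{same} lattice line, so it is not clear how it ``propagates'' obstructions inward. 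As written, the argument establishes nothing beyond $144-M\ge 0$.

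The idea you are missing is temporal, not geometric: the cross $C_{M-j}$ placed at move $M-j$ can only ever be covered by the lines $L_{M-j},L_{M-j+1},\dots,L_M$ (a line drawn at an earlier move cannot pass through a cross that does not yet exist), so on the terminal board $p(C_M)=3$, $p(C_{M-1})\ge 2$, and $p(C_{M-2})\ge 1$. These three crosses alone contribute at least $6$ to the final potential, giving $144-M\ge 6$ with no case analysis at all. This is the paper's proof, and it is essentially one sentence once the reduction is in place; your proposal replaces it with an open-ended geometric classification that would be far harder to carry out, if it can be carried out at all.
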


To see this, 
let $M$ be the maximum score 
and consider the last three moves. 
The crosses drawn in the last three moves $M$, $M-1$ and $M-2$ 
are eventually covered by one line, by at most two lines, 
and by at most three lines, respectively. 
In other words, those crosses have potentials 
$3$, $\geq 2$, and $\geq 1$, respectively, 
at the end of the game. 
This implies $144-M\ge 6$, and thus $M\le 138$.

%------------------------------------------------------------
\subsection{Improvements}

We next show some small improvements of maximum scores in the framework 
of potential-based analysis. 
Our improvements are obtained by focusing on the last four moves. 
We denote the potential of a cross $C$ on a board by $p(C)$. 

\begin{lemma}
\label{last3moves7}
The sum of the potentials of the three crosses 
that are drawn in the last three moves is greater than or equal to $7$. 
\end{lemma}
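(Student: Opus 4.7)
The plan is to sharpen the bound of $6$ that is implicit in Karjalainen's analysis (Theorem~\ref{karjalainen138}) to a bound of $7$. That earlier bound came from the observations $p(C_M)=3$, $p(C_{M-1})\ge 2$, and $p(C_{M-2})\ge 1$, which follow because $C_M$ can only be covered by $L_M$, the cross $C_{M-1}$ can only be covered by the lines $L_{M-1}$ and $L_M$, and the cross $C_{M-2}$ can only be covered by the lines $L_{M-2}$, $L_{M-1}$ and $L_M$. These are the only constraints of this type, so the sum $p(C_M)+p(C_{M-1})+p(C_{M-2})$ equals $6$ if and only if all three inequalities hold with equality. The whole task is to rule out this one tight case.

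So I would argue by contradiction: assume the sum equals $6$. Then $p(C_{M-1})=2$ forces $C_{M-1}$ to be covered by exactly two lines, which can only be $L_{M-1}$ and $L_M$; and $p(C_{M-2})=1$ forces $C_{M-2}$ to be covered by exactly three lines, which can only be $L_{M-2}$, $L_{M-1}$ and $L_M$. In particular, both segments $L_M$ and $L_{M-1}$ pass through the two distinct lattice points $C_{M-1}$ and $C_{M-2}$.

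The finishing step is a short geometric observation. Two distinct line segments in the plane that share two points must be collinear, so $L_M$ and $L_{M-1}$ would lie along a common geometric line and hence have the same direction. But the 5D rule requires same-direction lines to be disjoint, contradicting the fact that they share the cross $C_{M-1}$ (and $C_{M-2}$). This contradiction eliminates the tight case and forces the sum to be at least $7$.

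The main thing to be careful with is the combinatorial bookkeeping in the first step, namely verifying that no other decomposition of the sum into three summands is consistent with the bounds $p(C_M)=3$, $p(C_{M-1})\ge 2$, $p(C_{M-2})\ge 1$, so that the single configuration in which all three bounds are tight is really the only obstruction. After that, the collinearity-versus-disjointness dichotomy does the rest, and I do not anticipate needing a case analysis on directions or on the relative position of the three crosses.
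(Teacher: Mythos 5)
Your proposal is correct and follows essentially the same route as the paper: assume the tight case $p(C_M)=3$, $p(C_{M-1})=2$, $p(C_{M-2})=1$, deduce that both $L_M$ and $L_{M-1}$ must cover both $C_{M-1}$ and $C_{M-2}$, and conclude that these two lines are collinear and overlap, contradicting the rules. The collinearity observation you make explicit is exactly the (implicit) finishing step in the paper's argument.
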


\begin{proof}
Consider the board at move $N$. 
According to the arguments for Theorem~\ref{karjalainen138}, 
$p(C_N)=3$, $p(C_{N-1})\ge 2$ and $p(C_{N-2})\ge 1$ hold for crosses 
$C_N$, $C_{N-1}$ and $C_{N-2}$ at moves $N$, $N-1$ and $N-2$, respectively. 
Here, $p(C_N)=3$, $p(C_{N-1})=2$ and $p(C_{N-2})=1$ 
cannot be satisfied simultaneously. 
Suppose they can. 
Then line $L_{N-1}$ has to cover cross $C_{N-2}$ as well as $C_{N-1}$, 
and line $L_N$ has to cover both crosses $C_{N-2}$ and $C_{N-1}$ 
as well as cross $C_N$, 
and this forces such two lines $L_{N-1}$ and $L_N$ to overlap. 
This contradicts the rules of Morpion Solitaire, 
and thus $p(C_N)+p(C_{N-1})+p(C_{N-2})> 6$ holds. 
\end{proof}

Lemma~\ref{last3moves7} alone improves an upper bound to $137$, 
and we can save one more move. 

\begin{theorem}
The number of moves in Morpion Solitaire 5D
cannot exceed $136$.
\end{theorem}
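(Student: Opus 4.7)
My plan is to extend Lemma~\ref{last3moves7} by one more cross: I will show that the four crosses drawn in the final four moves satisfy
\[
p(C_N) + p(C_{N-1}) + p(C_{N-2}) + p(C_{N-3}) \ge 8.
\]
Since step~1 of a move contributes exactly $+4$ to the total potential and step~2 contributes exactly $-5$, the total potential after $N$ moves equals $144 - N$ on the nose. As every cross has nonnegative potential, the displayed inequality forces $144 - N \ge 8$, and hence $N \le 136$.

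The proof splits on the value of $p(C_{N-3})$. If $p(C_{N-3}) \ge 1$, there is nothing more to do: Lemma~\ref{last3moves7} already gives the last three potentials a sum of at least $7$, so adding $C_{N-3}$ yields at least $8$. The only real case is $p(C_{N-3}) = 0$, which forces all four lines $L_{N-3}, L_{N-2}, L_{N-1}, L_N$ to pass through $C_{N-3}$, necessarily in four distinct directions (since in 5D two parallel lines cannot share a cross). I will then argue that this is incompatible with $p(C_N) + p(C_{N-1}) + p(C_{N-2}) = 7$, so the sum of the last three potentials must in fact be at least $8$, and hence so is the sum of the last four.

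Concretely, the minimum $p(C_N) + p(C_{N-1}) + p(C_{N-2}) = 7$ splits into three sub-cases according to which two of the coverings ``$L_N$ covers $C_{N-1}$'', ``$L_N$ covers $C_{N-2}$'', ``$L_{N-1}$ covers $C_{N-2}$'' actually hold. In each sub-case, combining with the assumption $C_{N-3} \in L_{N-1} \cap L_N$ exhibits two distinct crosses among $\{C_{N-3},C_{N-2},C_{N-1}\}$ lying on both $L_{N-1}$ and $L_N$; this forces these two length-5 segments to be collinear and to overlap, violating the 5D rule exactly as in the proof of Lemma~\ref{last3moves7}. I expect the main obstacle to be only notational: keeping the three sub-cases organized and verifying the shared-crosses counts in each. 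No new geometric ideas are needed beyond the ``two shared crosses force overlap'' observation already used in Lemma~\ref{last3moves7}.
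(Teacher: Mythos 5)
Your proof is correct, and it takes a genuinely different route from the paper's. You strengthen Lemma~\ref{last3moves7} to the last \emph{four} crosses, working entirely on the final board: if $p(C_{N-3})\ge 1$ the bound of $8$ is immediate, and if $p(C_{N-3})=0$ then both $L_{N-1}$ and $L_N$ pass through $C_{N-3}$ (since a line can only cover crosses already present when it is drawn, the four covering lines must be $L_{N-3},\dots,L_N$ --- worth stating explicitly), and each of the three ways of achieving $p(C_N)+p(C_{N-1})+p(C_{N-2})=7$ hands these two lines a second common cross among $\{C_{N-1},C_{N-2}\}$, forcing an overlap; all three sub-cases do close. The conclusion $144-N\ge 8$ then gives $N\le 136$. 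The paper instead steps back to the board after move $M-1$ and shows its \emph{total} potential is at least $9$: feasibility of move $M$ forces the four pre-existing crosses on $L_M$ to have positive potential, so at least one cross outside $\{C_{M-1},C_{M-2},C_{M-3}\}$ contributes, and a case split on whether $L_M$ covers all three trailing crosses (making them collinear, hence each still of potential $3$) or misses one of them (yielding two outside contributors on top of Lemma~\ref{last3moves7}) gives $144-(M-1)\ge 9$. Both arguments land on $136$; yours is a cleaner, self-contained extension of the lemma that never appeals to the feasibility of a further move, while the paper's ``look one move ahead'' device is the one that would more naturally be pushed further if one wanted to analyze additional trailing moves.
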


\begin{proof}
Let $M$ be the maximum score, and consider a board at move $M-1$. 
First, we can see that in order that move $M$ is feasible, 
there exists a cross $C$ other than $C_{M-1}$, $C_{M-2}$ and $C_{M-3}$ 
with $p(C)\ge 1$. 
Then we determine the total potential of board $M-1$ by a case analysis; 
whether line $L_M$  drawn in move $M$ covers all three crosses 
$C_{M-1}$, $C_{M-2}$ and $C_{M-3}$, or not. 

Case 1: line $L_M$ covers all crosses $C_{M-1}$, $C_{M-2}$ and $C_{M-3}$. 
In this case, three crosses $C_{M-1}$, $C_{M-2}$ and $C_{M-3}$ lie 
on a common lattice line. 
Since no two lines can overlap, 
line $L_{M-2}$ that covers $C_{M-3}$ 
and line $L_{M-1}$ that covers both $C_{M-2}$ and $C_{M-3}$ 
are not compatible. 
Hence, $p(C_{M-1})=p(C_{M-2})=p(C_{M-3})=3$ holds. 
This, together with the fact that there exists a cross $C$ with $p(C)\ge 1$ 
other than $C_{M-1}$, $C_{M-2}$ and $C_{M-3}$ guarantees 
$p(C_{M-1})+p(C_{M-2})+p(C_{M-3})+p(C)\ge 10$. 

Case 2: line $L_M$ does not cover at least one of crosses 
$C_{M-1}$, $C_{M-2}$ or $C_{M-3}$. 
In this case, there must exist two different crosses $C$ and $C'$ 
with $p(C)\ge 1$ and $p(C')\ge 1$. 
Therefore, together with Lemma~\ref{last3moves7}, 
$p(C_{M-1})+p(C_{M-2})+p(C_{M-3})+p(C)+p(C')\ge 9$ holds. 

To put both cases together, 
the total potential of an arbitrary board of move $M-1$ 
is greater than or equal to 9. 
That is, $144-(M-1)\ge 9$ holds, which implies $M\le 136$. 
\end{proof}

\section{Line-based Analysis of Upper Bounds}

In this section, we introduce a new approach 
for deriving better upper bounds, which we call the line-based analysis. 
It is based on the relationship between the number of lines on a board 
and the number of lattice points they cover. 

The following observation is easy but crucial. 

\medskip
\noindent
{\bf Fact}\ 
After $N$ moves, 
there are $N+36$ crosses and $N$ lines. 

\medskip
\noindent
Let $c(N)$ denote the minimum number of lattice points 
that are covered by $N$ lines of length 5 in an arbitrary layout 
on a board (lattice plane). 
Then in order for a board of move $N$ to be feasible (realizable), 
it has to satisfy that $c(N)\le N+36$. 
Conversely, for $N$ that satisfies $c(N) > N+36$, 
such a move $N$ is infeasible. 
Here, since this game proceeds move by move, 
%(discretely) continuously, 
if a board of move $N$ is infeasible then all boards of moves 
greater than $N$ are infeasible. 
Hence, these observations imply the following property. 

\medskip
\noindent
{\bf Property (Board Infeasibility Condition)} 
If there exists $N$ that satisfies $c(N)> N+36$, 
then an upper bound on the maximum score is $N-1$. 

\medskip
In the subsequent discussions, we derive new upper bounds on the maximum score 
by fully utilizing this property. 
In this case, however, since it is not easy to obtain $c(N)$ directly, 
we compute a lower bound $c'(N)$ on $c(N)$, 
and we try to find $N$ that satisfies 
the Board Infeasibility Condition for that $c'(N)$.

\subsection{An Upper Bound of 132}

Here, 
we count the number of lattice points covered by lines 
by focusing on lines in one direction among four that we draw arbitrarily. 
Then we have the following lower bound on $c(N)$. 

\begin{claim}
\label{claimA}
For any move $N$, $c(N)\geq\lceil\frac{N}{4} \rceil\times 5$ holds. 
\end{claim}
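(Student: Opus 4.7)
The plan is to combine a simple pigeonhole argument with the 5D disjointness rule. Every line is drawn in one of only four directions (horizontal, vertical, and the two diagonals), and under 5D no two lines in the same direction may share any lattice point, so lines sharing a direction contribute pairwise \emph{disjoint} sets of covered points.

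First I would fix any layout of $N$ lines of length $5$ that is legal under 5D. By pigeonhole, some direction $d$ carries at least $\lceil N/4 \rceil$ of the $N$ lines. Picking any $\lceil N/4 \rceil$ such lines and invoking the 5D disjointness rule, one sees that they cover exactly $5 \lceil N/4 \rceil$ distinct lattice points. Since these points form a subset of the points covered by the whole layout, and the layout was arbitrary, $c(N) \geq 5 \lceil N/4 \rceil$ follows immediately.

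There is really no substantive obstacle here: the whole argument is one step of pigeonhole plus the 5D disjointness axiom. The only subtlety worth articulating is that the definition of $c(N)$ must implicitly restrict to layouts obeying the 5D rule; without this restriction the claim would fail, because one could stack many parallel lines on top of one another and cover only $5$ points in total. Once this interpretation is fixed, the counting is immediate, and any subsequent sharpening of the bound (as pursued in the remainder of the paper) will instead have to reason about two or more directions simultaneously or incorporate finer geometric input.
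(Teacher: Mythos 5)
Your proof is correct and is essentially the paper's own argument: pigeonhole over the four directions to find one carrying at least $\lceil N/4\rceil$ lines, then use the 5D disjointness rule to conclude these cover $5\lceil N/4\rceil$ distinct points. Your explicit remark that $c(N)$ must be understood as ranging over layouts obeying the 5D same-direction disjointness rule is a reasonable clarification of what the paper leaves implicit, but it does not change the substance.
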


\begin{proof}
Since we draw $N$ lines in all, there is a direction 
in which at least $\lceil \frac{N}{4}\rceil$ lines are drawn. 
They cover at least $\lceil \frac{N}{4} \rceil \times 5$ lattice points. 
\end{proof}

By Claim~\ref{claimA}, we have the following upper bound. 

\begin{theorem}
The number of moves in Morpion Solitaire 5D
cannot exceed $132$. 
\end{theorem}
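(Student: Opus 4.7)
The plan is to combine Claim~\ref{claimA} with the Board Infeasibility Condition in the most direct way: substitute the lower bound $c'(N) = \lceil N/4\rceil \cdot 5$ into the infeasibility inequality $c(N) > N + 36$, and search for the smallest $N$ at which this inequality becomes true. Everything else is a routine arithmetic verification.

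Concretely, I would proceed as follows. First, state explicitly that by Claim~\ref{claimA} the function $c'(N) = \lceil N/4 \rceil \cdot 5$ is a valid lower bound on $c(N)$, so by the Board Infeasibility Condition it suffices to exhibit some $N$ with $\lceil N/4 \rceil \cdot 5 > N + 36$, which then yields the upper bound $N - 1$. Second, I would solve this inequality. Writing $N = 4q + r$ with $r \in \{0,1,2,3\}$, the inequality becomes $5(q + \mathbf{1}[r > 0]) > 4q + r + 36$, and a quick case split shows the smallest solution occurs at $r = 1$, $q = 33$, that is, $N = 133$. Third, I would simply plug in $N = 133$: $\lceil 133/4 \rceil \cdot 5 = 34 \cdot 5 = 170 > 169 = 133 + 36$, so move $133$ is infeasible and the maximum score is at most $132$. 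A brief check that $N = 132$ does not already trigger infeasibility ($33 \cdot 5 = 165 \le 168$) confirms that this is the best bound obtainable from Claim~\ref{claimA} alone.

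There is really no conceptual obstacle here; the only thing to be careful about is the ceiling function, since the threshold $N$ falls at $r = 1$ rather than at a multiple of $4$. Writing the case split with $r \in \{0,1,2,3\}$ keeps this transparent and avoids off-by-one errors. All other ingredients, namely the Fact that a board at move $N$ carries $N+36$ crosses and $N$ lines and the Board Infeasibility Condition, have been established in the preceding paragraphs and can be cited directly.
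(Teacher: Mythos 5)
Your proposal is correct and follows essentially the same route as the paper: apply Claim~\ref{claimA} at $N=133$ to get $c(133)\ge 34\times 5=170>169=133+36$ and invoke the Board Infeasibility Condition. The extra case split on $N\bmod 4$ showing that $133$ is the smallest such $N$ is a harmless (and accurate) addition that the paper omits.
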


\begin{proof}
In case that $N=133$, $c(N)\ge\lceil\frac{133}{4}\rceil\times 5=170$ holds 
according to the claim. 
On the other hand, $N+36=169$ 
and this $N$ satisfies the Board Infeasibility Condition. 
\end{proof}

\subsection{An Upper Bound of 121}

In the previous arguments, to count the number of lattice points 
covered by lines, we focused on the lines only in one direction. 
By considering two directions, 
we will obtain a tighter lower bound on $c(N)$. 
We first prove a technical lemma. 

\begin{lemma}
\label{min_cover}
Suppose that $5k+1$ $(k\ge 0)$ lines of length $5$ 
are drawn in each of two different directions (among the possible four). 
Then they cover at least $(5k+1)\times 5+4$ lattice points. 
\end{lemma}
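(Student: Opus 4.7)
The plan is to introduce a residue-class partition of the lattice modulo~5 and apply a pigeonhole-style count inside each class. I will choose a linear function $\phi(x,y) \in \{x,\, y,\, x+y,\, x-y\}$ that is constant along every $D_2$-segment (so each $D_2$-line's five points share one residue modulo~5); concretely, $\phi = y$ when $D_2$ is horizontal, $\phi = x$ when $D_2$ is vertical, $\phi = y-x$ when $D_2$ is the $/$-direction, and $\phi = y+x$ when $D_2$ is the $\backslash$-direction. A routine case check shows that with this choice $\phi$ then changes along every $D_1$-segment by a constant $\delta \in \{\pm 1,\pm 2\}$ per step; since $\gcd(\delta,5)=1$, the five points of any $D_1$-line realise each of the five residues modulo~5 exactly once.

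With this set-up, for each residue $r \in \{0,1,2,3,4\}$ the set $U_{D_1}$ contributes exactly $5k+1$ points of class~$r$ (one per $D_1$-line), while $U_{D_2}$ contributes exactly $5 v_r$ points, where $v_r$ denotes the number of $D_2$-lines whose (constant) $\phi$-value lies in class~$r$. The intersection restricted to class~$r$ is therefore at most $\min(5k+1,\, 5 v_r)$. The key pigeonhole observation is that $\sum_r v_r = 5k+1 > 5k$, so at least one residue satisfies $v_r \ge k+1$; writing $a \ge 1$ for the number of such ``heavy'' classes, a short calculation gives $|U_{D_1} \cap U_{D_2}| \le a(5k+1) + 5(5k+1 - a(k+1)) = 25k+5-4a \le 25k+1$.

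Since lines in a single direction are disjoint (the 5D rule), $|U_{D_1}| = |U_{D_2}| = 5(5k+1)$, and inclusion--exclusion then yields $|U_{D_1} \cup U_{D_2}| \ge 2(25k+5) - (25k+1) = (5k+1)\cdot 5 + 4$, which is the desired bound.

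The main obstacle is really just the preliminary verification that a suitable $\phi$ exists for every pair of directions --- the pair of diagonals $(/,\backslash)$ in particular requires the step $\delta = \pm 2$ rather than $\pm 1$, so one relies on $\gcd(2,5)=1$. Once $\phi$ is in place, the rest is a one-line pigeonhole on five bins with $5k+1$ balls plus the elementary arithmetic bound above.
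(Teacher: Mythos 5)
Your proof is correct and is essentially the same argument as the paper's: a periodic $5$-colouring of the lattice that is constant on each line of one direction and equidistributed over the five classes on each line of the other direction, followed by a pigeonhole count showing some colour class receives at least $k+1$ monochromatic lines and hence at least $4$ points missed by the other family. The only substantive difference is that you explicitly verify the colouring exists for every pair of directions (including the two diagonals, via $\gcd(2,5)=1$), a point the paper dismisses with ``without loss of generality,'' and you package the final count as inclusion--exclusion rather than directly exhibiting $4$ uncovered points.
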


\begin{proof}
We assume without loss of generality 
that the two different directions are vertical and horizontal. 
We color vertical lattice lines on the board periodically 
with different five colors, 
and consider the situation where $5k+1$ lines are drawn arbitrarily 
along the vertical and horizontal directions on that board. 
Notice here that the number of lattice points covered by line is the same 
in both the vertical and horizontal directions, that is $(5k+1)\times 5$. 
Then we can observe that 
\begin{enumerate}
\item
in all the lattice points covered by lines drawn in horizontal directions, 
there are exactly $5k+1$ points colored in each one of five colors, and 
\item
if we classify the lattice points covered by vertical lines by their colors, 
there are at least $5k+5$ points in some one color out of five. 
\end{enumerate}
Therefore, at least $4$ out of $5k+5$ lattice points 
are not covered by horizontal lines. 
Consequently, these lines cover $(5k +1)\times 5+4$ 
lattice points. 
\end{proof}

Figure~\ref{5k+1} shows two different layouts of lines 
where this lemma holds for $k=1$. 
%Moreover, Lemma~\ref{min_cover} can be generalized as follows 
%for different lengths of lines. 

\begin{figure}[hbt]
\centering
\scalebox{0.70}{\includegraphics{./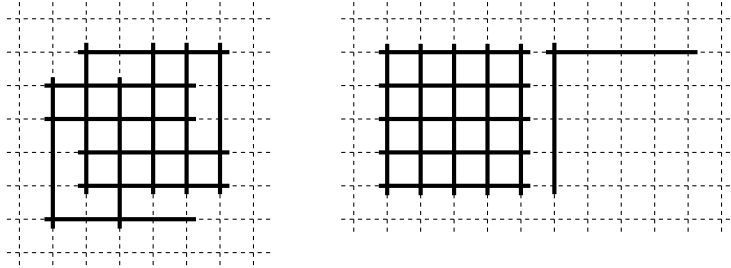}}
%\scalebox{0.64}{\includegraphics{./5k+1.eps}}
\caption{Six lines in each of two directions 
cover at least 34 lattice points.} 
\label{5k+1}
\end{figure}

%\begin{lemma}
%Suppose that $k\alpha+\beta$ $(k\ge 0; \alpha>\beta\ge 0)$ lines 
%of length $\alpha$ are drawn in each direction of two different directions 
%on board. 
%Then they cover at least $\alpha(k\alpha+\beta)+\beta\alpha-\beta^2$
%lattice points. 
%\end{lemma}

%\begin{figure*}[hbt]
%\centering
%\scalebox{0.64}{\includegraphics{./5k+1.eps}}
%\caption{Six lines in each of two directions 
%cover at least 34 lattice points.} 
%\label{5k+1}
%\end{figure*}

Now we have the following claim. 
%In the following claim, 
%we use Lemma~\ref{min_cover} with $\beta=1$. 
%That is, if we draw $5k+1$ lines in each of two different directions, 
%they cover at least $(5k+1)\times 5+4$ lattice points. 

\begin{claim}
\label{claim: b}
For a move $N$, if $N \not\equiv 1\pmod{4}$ and 
$\lceil \frac{N}{4} \rceil \equiv 1\pmod{5}$, 
then $c(N)\geq \lceil \frac{N}{4} \rceil \times 5 + 4$. 
\end{claim}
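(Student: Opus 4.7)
The plan is to combine Lemma~\ref{min_cover} with a pigeonhole count on how the $N$ lines distribute across the four directions. Set $m := \lceil N/4 \rceil$ and use the hypothesis $m \equiv 1 \pmod{5}$ to write $m = 5k+1$ with $k \geq 0$; this rewrites the desired bound as $c(N) \geq 5(5k+1) + 4$, which is exactly the conclusion of Lemma~\ref{min_cover} applied to $5k+1$ lines in each of two directions. So the task reduces to producing two directions that each contain at least $m$ of the $N$ lines.

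To do that, I would argue by contradiction. If at most one direction contained $m$ or more lines, then the three remaining directions could carry at most $m-1$ lines each, giving a global bound of $m + 3(m-1) = 4m-3$ on $N$. I would then check that the hypothesis $N \not\equiv 1 \pmod 4$ forces $N \in \{4m-2, 4m-1, 4m\}$ (the case $N = 4m-3$ is precisely $N \equiv 1 \pmod 4$), so $N \geq 4m - 2 > 4m - 3$, a contradiction. Hence at least two directions have $\geq m = 5k+1$ lines each.

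Next I would pick any $5k+1$ lines from each of those two directions and invoke Lemma~\ref{min_cover}: this particular subset already covers $\geq (5k+1)\cdot 5 + 4$ lattice points, and the remaining lines of the configuration can only increase the set of covered points. Thus $c(N) \geq 5m + 4 = \lceil N/4 \rceil \times 5 + 4$, as desired.

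The main obstacle is really only the bookkeeping in the pigeonhole step — making sure the two numerical hypotheses ($N \not\equiv 1 \pmod 4$ and $m \equiv 1 \pmod 5$) play their intended roles: the first guarantees the existence of two ``fat'' directions (so that the pigeonhole argument goes through), while the second makes the count $m$ of the form $5k+1$ so that Lemma~\ref{min_cover} applies at full strength. Once both are in place, the rest is immediate.
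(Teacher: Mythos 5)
There is a genuine gap in your pigeonhole step. You assert that if at most one direction contains $m := \lceil N/4\rceil$ or more lines, then $N \le m + 3(m-1) = 4m-3$. But that inequality silently caps the one ``fat'' direction at $m$ lines, which nothing justifies: that direction could contain far more than $m$ lines (in the extreme, all $N$ of them could lie in a single direction). So the conclusion you draw --- that at least two directions each carry at least $m$ lines --- is simply false in general; e.g.\ for $N=122$ a distribution of $62+20+20+20$ has only one direction reaching $m=31$. Your modular bookkeeping ($N \in \{4m-2, 4m-1, 4m\}$ when $N\not\equiv 1 \pmod 4$) is fine, but it only rules out the configuration $m + (m-1) + (m-1) + (m-1)$, not configurations with one oversized direction.

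The repair is a case split, which is exactly how the paper proceeds. If some direction contains at least $m+1$ lines, then --- because in 5D two lines in the same direction are disjoint, hence parallel lines cover pairwise disjoint sets of lattice points --- that single direction already covers at least $5(m+1) = 5m+5 > 5m+4$ points, and the claim holds trivially. Otherwise the maximum count over the four directions is exactly $m$, and only then does your pigeonhole argument correctly force a second direction to also attain $m = 5k+1$, after which the application of Lemma~\ref{min_cover} (and the monotonicity remark about extra lines only adding covered points) goes through as you describe. With that first case added, your proof coincides with the paper's.
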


\begin{proof}
If the maximum number of lines drawn in a certain direction 
is greater than or equal to $\lceil \frac{N}{4} \rceil + 1$, 
the number of lattice points covered by some line is at least 
$\lceil\frac{N}{4}\rceil \times 5 + 5$ and the statement trivially holds. 
So suppose otherwise, that is, 
the maximum number of lines drawn in one direction 
is equal to $\lceil \frac{N}{4} \rceil$. 
Since $N \not\equiv 1\pmod{4}$, 
at least $\lceil \frac{N}{4} \rceil$ lines are drawn 
in more than one direction. 
Since this number $\lceil \frac{N}{4} \rceil$ 
equals $5k+1$ for some $k$ by assumption, 
we can apply Lemma~\ref{min_cover} 
to conclude that 
the lines drawn in these 
two directions 
cover at least $\lceil \frac{N}{4} \rceil \times 5 + 4$ lattice points. 
This implies the desired inequality. 
\end{proof}

Using this fact, we obtain a new upper bound. 

\begin{theorem}
The number of moves in Morpion Solitaire 5D
cannot exceed $121$. 
\end{theorem}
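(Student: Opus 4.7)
The plan is to apply the Board Infeasibility Condition using the sharper lower bound on $c(N)$ supplied by Claim~\ref{claim: b}. Concretely, I need to find the smallest $N$ for which both congruence hypotheses of the claim hold and for which the resulting lower bound $\lceil N/4\rceil\cdot 5 + 4$ strictly exceeds $N+36$. Then $N-1$ is an upper bound on the maximum score, and I expect the arithmetic to work out at $N=122$ so as to give $121$.

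First I would solve the inequality $\lceil N/4\rceil\cdot 5 + 4 > N+36$ in the residue class $N\equiv 0\pmod 4$ (so that $\lceil N/4\rceil = N/4$); this reduces to $N/4 > 32$, i.e.\ $N\ge 132$. This residue class is too large, so I would instead try to fit the inequality using the extra $+4$ afforded by the claim, looking in residue classes $N\equiv 0,2,3\pmod 4$ (excluded from $N\equiv 1\pmod 4$ as required). Writing $\lceil N/4\rceil = (N+r)/4$ with $r\in\{0,1,2\}$ (when $N\equiv 0,3,2\pmod 4$), the inequality becomes $5(N+r)/4 + 4 > N+36$, i.e.\ $N > 128 - 5r$. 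This is most easily satisfied when $r=2$, i.e.\ $N\equiv 2\pmod 4$, where it reads $N>118$.

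Next I impose the second congruence $\lceil N/4\rceil\equiv 1\pmod 5$. For $N\equiv 2\pmod 4$ with $N>118$, the candidate values of $\lceil N/4\rceil$ are $30,31,32,\dots$, and the smallest one congruent to $1\pmod 5$ is $31$, attained at $N=122$. I would then verify directly: $N=122$ satisfies $N\not\equiv 1\pmod 4$ and $\lceil 122/4\rceil = 31\equiv 1\pmod 5$, so Claim~\ref{claim: b} gives $c(122)\ge 31\cdot 5 + 4 = 159$, while $N+36 = 158$. Hence $c(122) > 122+36$, the Board Infeasibility Condition applies at $N=122$, and the maximum score is at most $121$.

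There is no serious obstacle; the argument is a single application of Claim~\ref{claim: b} once the correct $N$ is identified. The only subtlety is to recognize that the residue condition $\lceil N/4\rceil\equiv 1\pmod 5$ is the binding constraint: without it, the inequality $\lceil N/4\rceil\cdot 5 + 4 > N+36$ alone would already hold from some smaller $N$, but the lemma only supplies the extra $+4$ in that congruence class, so $N=122$ is the first admissible value.
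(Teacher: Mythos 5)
Your proposal is correct and is essentially the paper's own proof: both verify that $N=122$ satisfies $122\equiv 2\pmod 4$ and $\lceil 122/4\rceil=31\equiv 1\pmod 5$, apply Claim~\ref{claim: b} to get $c(122)\ge 159>158=122+36$, and invoke the Board Infeasibility Condition. The extra search you perform over residue classes to locate $N=122$ is sound motivation but not needed for the proof itself.
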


\begin{proof}
When $N=122$, 
since $122 \equiv 2\pmod{4}$ 
and $\lceil \frac{122}{4} \rceil = 1 \pmod{5}$, 
the hypothesis of Claim~\ref{claim: b} is satisfied, 
and thus $c(122)\ge 31\times 5+4=159$. 
Since this exceeds $N+36=158$, 
we have the Board Infeasibility Condition. 
\end{proof}

\begin{figure}[hbt]
\centering
\scalebox{0.70}{\includegraphics{./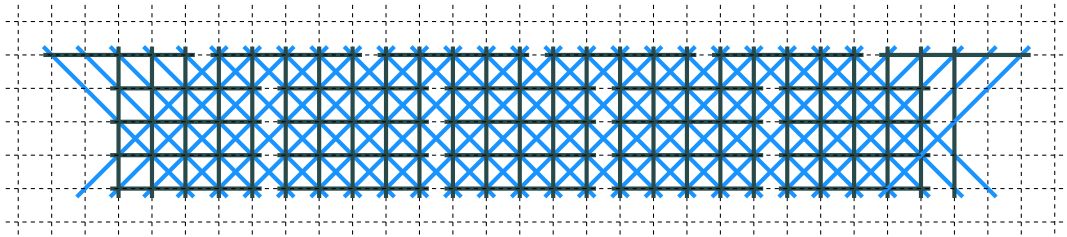}}
%\scalebox{0.65}{\includegraphics{./lower102.eps}}
\caption{102 lines cover 138 lattice points.} 
\label{lower102}
\end{figure}

\subsection{Remarks}

We mention that a similar argument to Claim~\ref{claim: b} holds
when $N \not\equiv 1 \pmod 4$ and 
$\lceil \frac{N}{4} \rceil \equiv 2$ or $3\pmod 5$. 
In this case, if the maximum number of lines drawn in a certain direction 
is $\lceil \frac{N}{4} \rceil$, 
the number of lattice points covered by some line 
is at least $\lceil \frac{N}{4} \rceil \times 5 + 6$. 
On the other hand, if the maximum number of lines drawn in a certain direction 
is equal to or greater than $\lceil \frac{N}{4} \rceil+1$, 
that is at least $\lceil \frac{N}{4} \rceil \times 5 + 5$. 
So putting these two cases together, 
we have $c(N)\geq \lceil \frac{N}{4} \rceil \times 5 + 5$. 
However, such $N$ that satisfies this hypothesis 
and the Board Infeasibility Condition is at least 126, 
and thus we know that an upper bound on the maximum score 
can be improved to 125 at best. 

We also note a limitation of this approach
of trying to use $c (N)$: 
we cannot obtain an upper bound smaller than $102$ 
by proving the Board Infeasibility Condition. 
This is because we have $c(N) \leq N+36$ for all $N\le 102$. 
Figure~\ref{lower102} proves this inequality for $N = 102$, 
and we can also easily confirm that it holds for all smaller $N$. 
We mention that a similar layout was found by Bartsch \cite{morpion_web} 
in 2010 along a different context, which tries to find a solution 
of a 5D variant.

\section{Conclusion}

Although the ultimate goal of this game is to achieve the true maximum score, 
there are some other interesting questions. 

It is possible that some idea based on our line-based analysis 
can further improve the upper bound on the maximum score of 5D. 
For example, we may consider more than two directions in those arguments. 
Also we may somehow take the initial layout of 36 crosses into account, 
which we did not in this paper. 
% Theoretically, we are interested in the real value of $c(N)$. 

We can also try to apply our line-based analysis to 5T. 
There are variants of 5T called 5T+ and 5T++, 
defined by relaxing the original rules 
about the relationship between the numbers of crosses and the lines; 
see Boyer's web page~\cite{morpion_web}. 
On this web page, 
he shows how to play $317$ moves in the 5T++ variant, 
and expects that this number may be the best possible
(and hence may give an upper bound for 5T). 
Our line-based approach may help 
prove upper bounds close to this.

\end{document}